\documentclass{cccg20}
\usepackage{amssymb,amsmath}
\usepackage{microtype}
\usepackage{wrapfig}
\usepackage{graphicx}
\usepackage{xspace}
\usepackage{multicol}
\usepackage{cite}
\usepackage[table]{xcolor}
\usepackage[ruled,vlined,linesnumbered]{algorithm2e}
\graphicspath{{img/}}
\usepackage{subcaption}
\usepackage{enumitem}
\usepackage{hyperref}

\newcommand{\RE}{\mathbb{R}}

\newcommand{\bigOh}{\mathcal{O}}

\newcommand{\OO}[1]{O\kern-2pt\left(#1\right)}  

\newcommand{\setP}{\ensuremath{P}\xspace}
\newcommand{\setR}{\ensuremath{R}\xspace}

\newcommand{\ie}{\emph{i.e.},\xspace}
\newcommand{\eg}{\emph{e.g.},\xspace}

\newcommand{\wwlog}{\emph{w.l.o.g.}\xspace}

\newcommand{\numNE}{\kappa}

\newcommand{\spread}{\Delta}

\newcommand{\metricSet}{\mathcal{X}}
\newcommand{\metricFunc}{\textup{\textsf{d}}}
\newcommand{\metricSpace}{$(\metricSet,\metricFunc)$\xspace}
\newcommand{\ddim}{\textup{ddim}(\metricSet)}

\newcommand{\dist}[2]{\metricFunc(#1,#2)}
\newcommand{\nn}[1]{\textup{nn}(#1)}
\newcommand{\nenemy}[1]{\textup{ne}(#1)}


\newcommand{\CNN}{\textup{CNN}\xspace}
\newcommand{\FCNN}{\textup{FCNN}\xspace}
\newcommand{\SFCNN}{\textup{SFCNN}\xspace}
\newcommand{\NET}{\textup{NET}\xspace}

\newcommand{\MSS}{\textup{MSS}\xspace}
\newcommand{\VSS}{\textup{VSS}\xspace}
\newcommand{\RSS}{\textup{RSS}\xspace}

\definecolor{yellowcd}{RGB}{252, 229, 30}
\definecolor{bluecd}{RGB}{51, 0, 68}


\title{Social Distancing is Good for Points too!}

\author{Alejandro Flores-Velazco\thanks{University of Maryland, College Park, {\tt afloresv@cs.umd.edu}}}

\index{Flores-Velazco, Alejandro}


\begin{document}
\thispagestyle{empty}
\maketitle

\begin{abstract}
The \emph{nearest-neighbor rule} is a well-known classification technique that,
given a training set \setP of labeled points, classifies any unlabeled query point with the label of its closest point in \setP. The \emph{nearest-neighbor condensation} problem aims to reduce the training set without harming the accuracy of the nearest-neighbor rule.

\FCNN is the most popular algorithm for condensation. It is heuristic in nature, and theoretical results for it are scarce. In this paper, we settle the question of whether reasonable upper-bounds can be proven for the size of the subset selected by \FCNN. First, we show that the algorithm can behave poorly when points are too close to each other, forcing it to select many more points than necessary. We then successfully modify the algorithm to avoid such cases, thus imposing that selected points should ``keep some distance''. This modification is suffi\-cient to prove useful upper-bounds, along with approximation guarantees for the algorithm.
\end{abstract}

%
%
%
\section{Introduction}

In the context of non-parametric classification, a \emph{training set} \setP consists of $n$ points in a metric space \metricSpace, with domain $\metricSet$ and distance function $\metricFunc: \metricSet^2 \rightarrow \RE^{+}$. Additionally, \setP is partitioned into a finite set of \emph{classes} by associating each point $p \in \setP$ with a \emph{label} $l(p)$, indicating the class to which it belongs. Given an \emph{unlabeled} query point $q \in \metricSet$, the goal of a \emph{classifier} is to predict $q$'s label using the training set \setP.

The \emph{nearest-neighbor rule} is among the best-known classification techniques~\cite{fix_51_discriminatory}. It assigns a query point the label of its closest point in $\setP$, according to the metric~$\metricFunc$. The nearest-neighbor rule exhibits good classification accuracy both experimentally and theoretically \cite{stone1977consistent,Cover:2006:NNP:2263261.2267456,devroye1981inequality}, but it is often criticized due to its high space and time complexities. Clearly, the training set \setP must be stored to answer nearest-neighbor queries, and the time required for such queries depends to a large degree on the size and dimensionality of the data. These drawbacks inspire the question of whether it is possible replace \setP with a significantly smaller~subset, without significantly reducing the classification accuracy under the nearest-neighbor rule. This problem has been widely \mbox{studied, and it} is often called \emph{nearest-neighbor~condensation}~\cite{Hart:2006:CNN:2263267.2267647,ritter1975algorithm,gottlieb2014near,DBLP:conf/jcdcg/Toussaint02}.

\paragraph{Related work.}
A subset $\setR \subseteq \setP$ is said to be \emph{consistent} if and only if for every $p \in \setP$~its nearest-neighbor in \setR is of the same class as $p$. Intuitively, \setR is consistent~\cite{Hart:2006:CNN:2263267.2267647} if and only if all points of \setP are correctly classified using the nearest-neighbor rule over \setR. Formally, the problem of nearest-neighbor condensation consists of finding an ideally small consistent subset of \setP.

It is known that the problem of computing consistent subsets of minimum cardinality is NP-hard~\cite{Wilfong:1991:NNP:109648.109673,Zukhba:2010:NPP:1921730.1921735,khodamoradi2018consistent}. However, there exists an algorithm called \NET~\cite{gottlieb2014near} that computes a tight approximation of the minimum cardi\-nality consistent subset. Yet, this algorithm is not practical, and it is often outperformed on real-world training sets |with respect to both their runtime and size of the selected subsets| by simple heuristics for condensation.

\begin{figure*}[h!]
    \centering
    \begin{subfigure}[b]{.25\linewidth}
        \centering\includegraphics[width=.9\textwidth]{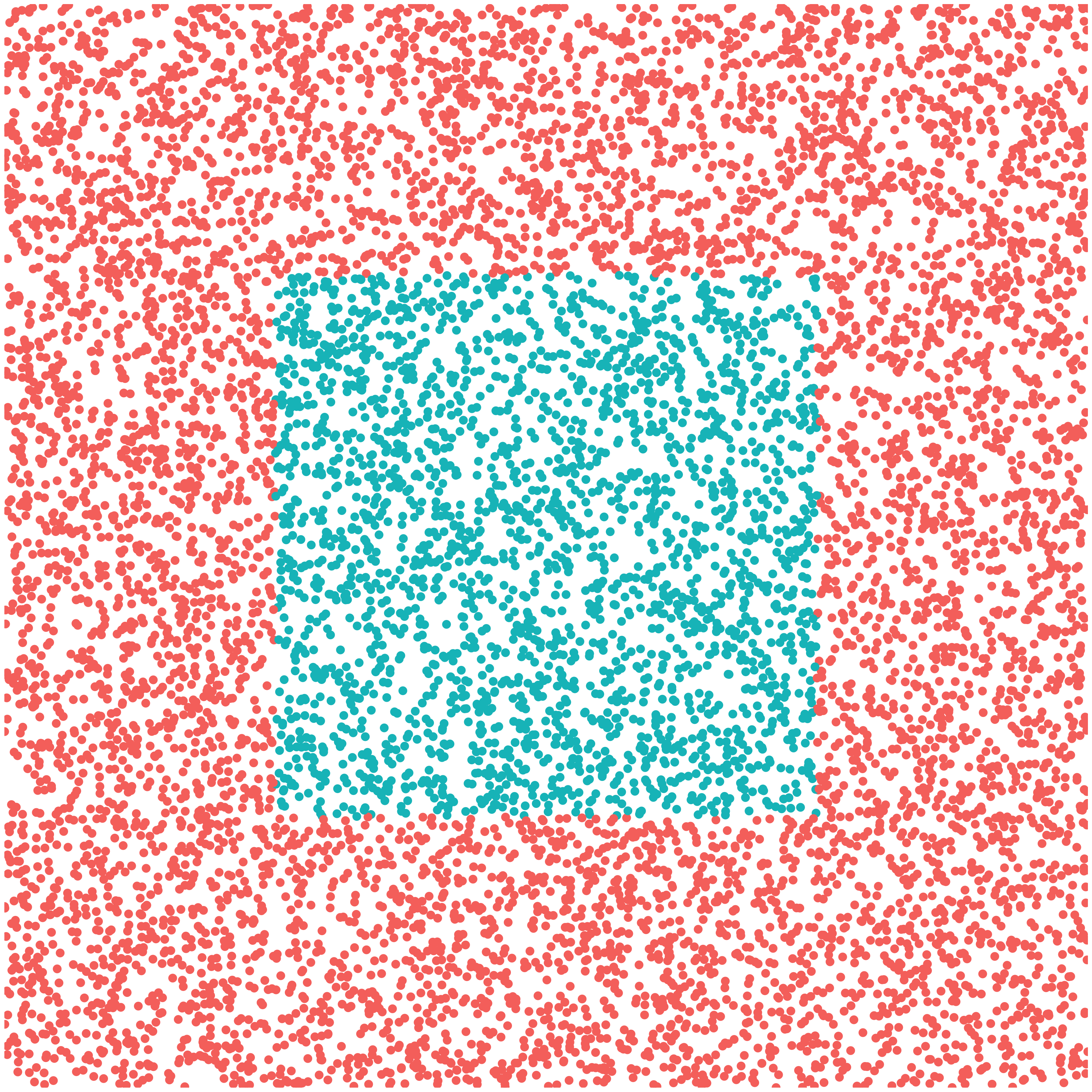}
        \caption{Training set ($10^4$\,pts)}\label{fig:algexample:dataset}
    \end{subfigure}%
    \newcommand{\printalgexample}[3]{%
        \begin{subfigure}[b]{.25\linewidth}
            \centering\includegraphics[width=.9\textwidth]{sel/#1.pdf}
            \caption{#3}\label{fig:algexample:#2}
        \end{subfigure}%
    }%
    \printalgexample{CNN}{cnn}{\CNN (253 pts)}%
    \printalgexample{FCNN}{fcnn}{\FCNN (185 pts)}%
    \printalgexample{SFCNN}{sfcnn}{\SFCNN (190 pts)}%
    
    \bigskip
    \printalgexample{MSS}{mss}{\MSS (234 pts)}%
    \printalgexample{RSS}{rss}{\RSS (192 pts)}%
    \printalgexample{VSS}{vss}{\VSS (193 pts)}%
    \printalgexample{NET}{net}{\NET (841 pts)}%
    
    \caption{An illustrative example of the subsets selected by different condensation algorithms from an initial training set \setP in $\RE^2$ of $10^4$ points. The list includes well-known algorithms like \CNN, \FCNN, \MSS, \RSS, \VSS, and \NET. We propose \SFCNN as a simple modification of \FCNN that can be successfully upper-bounded.}\label{fig:algexample}
\end{figure*}

Most algorithmic research for this problem has focused on heuristics; for comprehensive surveys, see \cite{DBLP:conf/jcdcg/Toussaint02,Toussaint02proximitygraphs,jankowski2004comparison}.
Out of the many heuristics proposed for this problem, \FCNN~\cite{angiulli2007fast} stands out due to its quadratic worst-case time complexity, and most importantly, its observed efficiency when applied to real-world training sets.
Alternatives include
    \CNN~\cite{Hart:2006:CNN:2263267.2267647},
    \MSS~\cite{barandela2005decision},
    \RSS~\cite{2019arXiv190412142F}, and
    \VSS~\cite{2019arXiv190412142F}.
These algorithms also run in quadratic time, except for \CNN, which has cubic runtime, and was the first algorithm proposed for condensation. See Figure~\ref{fig:algexample} for an illustrative comparison between these heuristics.

While such heuristics have been extensively studied experimentally~\cite{Garcia:2012:PSN:2122272.2122582}, theoretical results are scarce. Only recently in CCCG'19~\cite{2019arXiv190412142F}, we have shown that the size of the subset selected by \MSS cannot be bounded. On the other hand, we proved that the size of the subset selected by both \RSS and \VSS can be upper-bounded. However, until now, it remained open whether similar results could be achieved for \FCNN.

\paragraph{Contributions.}
In this paper, we settle the question of whether the size of the subsets selected by \FCNN can be upper-bounded. Our results are summarized as follows:
\begin{itemize}[leftmargin=*]
    \setlength\itemsep{-2pt}
    \item There exist training sets for which the subset selected by \FCNN is unbounded, particularly, when compared to the selection of other algorithms (\eg \RSS).
    \item We propose a modification of \FCNN, namely \SFCNN, for which we prove the following results: 
    \vspace*{-2pt}
    \begin{itemize}[leftmargin=*]
        \setlength\topsep{0pt}
        \setlength\itemsep{0pt}
        \item The size of the subset selected by \SFCNN has an upper-bound, similar to the one known for~\RSS.
        \item \SFCNN computes a tight approximation of the mini\-mum cardinality consistent subset of \setP.
    \end{itemize}
\end{itemize}

\paragraph{Preliminaries.}
Given any point $q \in \metricSet$ in the metric space, its nearest-neighbor, denoted by $\nn{q}$, is the closest point of \setP according the the distance function $\metricFunc$. Given a point $p \in \setP$, any other point of $\setP$ whose label differs from $p$'s is called an \emph{enemy} of $p$. The closest such point is called $p$'s \emph{nearest-enemy}, denoted by $\nenemy{p}$.

Clearly, the size of a condensed subset should depend on the spatial characteristics of the classes in the training set. For example, a consistent subset for two spatially well separated clusters should be smaller than the subset for one with two classes that have a high degree of overlap. To model this intrinsic complexity, define $\numNE$ to be the number of nearest-enemy points of \setP, \ie the cardinality of set $\{\nenemy{p} \mid p \in \setP\}$.

This has been previously used~\cite{2019arXiv190412142F} to prove useful upper-bounds for \RSS, and to show negative results for \MSS. In particular, it has been shown that \RSS selects $\bigOh(\kappa\,(3/\pi)^{d-1})$ points in $d$-dimensional Euclidean space, while \MSS's selection cannot be \mbox{bounded in terms of $\kappa$.}

\section{Nearby Points are Problematic}

\begin{algorithm}
 \DontPrintSemicolon
 \vspace*{0.1cm}
 \KwIn{Initial training set \setP}
 \KwOut{Consistent subset $\setR \subseteq \setP$}
 $\setR \gets \phi$\;
 $S \gets \textup{centroids}(\setP)$\;
 \While{$S \neq \phi$}{
  $\setR \gets \setR \cup S$\;
  $S \gets \phi$\;
  \ForEach{$p \in \setR$}{
   $S \gets S \cup \{ \textup{rep}(p,\textup{voren}(p,\setR,\setP)) \}$\;
  }
 }
 \KwRet{\setR}
 \vspace*{0.1cm}
 \caption{\FCNN}
 \label{alg:fcnn}
\end{algorithm}

Now consider the \FCNN algorithm~\cite{angiulli2007fast}. It follows an iterative incremental approach to build a consistent subset of \setP (see Algorithm~\ref{alg:fcnn} for a formal description). While not immediately evident, \FCNN it runs in $\bigOh(nm)$ worst-case time, where $m$ is the final size of the selected~subset.

The algorithm begins by selecting one point per class, in particular, the centroid of each class%
\footnote{For each class, its centroid is defined as the closest point of \setP to the geometrical center of all points of this class.}.
Then, it begins the iterative process, selecting other points until the subset is consistent. During each iteration, the algorithm identifies all points of \setP that are misclassified with respect to the current subset, and adds some of these points to the subset. Formally, for every point $p$ already in the subset, \FCNN selects one \emph{representative} among non-selected points, whose nearest-neighbor is $p$ and that belong to a different class than $p$. That is, the representative is selected from the set $\textup{voren}(p,\setR,\setP) = \{ q \in \setP \mid \nn{q,\setR} = p \wedge l(q) \neq l(p) \}$. Usually, the representative chosen is the one closest to $p$, although different approaches can be applied.

However, there is an issue with this algorithm. During any given iteration, nothing prevents the representatives of two neighboring points in \FCNN to be arbitrarily close to each other. This observation can be exploited to obtain the following result:

\begin{theorem}
\label{thm:fcnn-unbounded}
There exists a training set $\setP \subset \RE^d$ in Euclidean space, with constant number of classes, for which \FCNN selects $\Omega(\numNE/\xi)$ points, for any $0 < \xi < 1$.
\end{theorem}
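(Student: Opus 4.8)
The plan is to construct an explicit instance in $\RE^d$ that exploits precisely the weakness highlighted before the theorem: that two neighboring points already in $\setR$ can have representatives lying arbitrarily close together, so that \FCNN is forced into many iterations and selects far more points than the number of nearest-enemies warrants. I would work in $\RE^2$ first (the construction lifts trivially to higher $d$ by padding with zero coordinates) and use a small constant number of classes, say two or three. The idea is to place a long, thin ``corridor'' of points of one class, say red, and interleave along it a carefully spaced sequence of points of a second class, blue, so that each time \FCNN adds a point, the Voronoi-cell-enemy set of that point contains a fresh misclassified point only a tiny distance $\xi$-scale away, forcing one new selection per ``step'' along the corridor while the entire corridor has only $O(1)$ distinct nearest-enemy points.

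Concretely, here is the skeleton I would carry out. \textbf{(1) Layout.} Fix a horizontal segment and place red points $r_1,\dots,r_t$ at positions roughly $i$ apart, and for each $i$ a blue point $b_i$ placed very close to $r_i$ — at distance on the order of $\xi$ — but slightly offset so that $b_i$'s nearest neighbor in the \emph{initial} centroid-only subset is the red centroid, guaranteeing $b_i$ starts out misclassified. Choose the geometry so that all the $b_i$ share essentially the same nearest-enemy (a single designated red point, or $O(1)$ of them), which pins $\numNE = O(1)$; this is where the offset distances must be tuned carefully. \textbf{(2) Trace the algorithm.} Argue by induction on the iteration count that after \FCNN has selected $r_j$ and $b_j$ for $j \le i$, the point $r_{i+1}$ (or $b_{i+1}$) still lies in the Voronoi-enemy region of some already-selected point and is still misclassified, so it must be picked in a subsequent round; crucially, the $\xi$-scale proximity means no single selection ``cleans up'' more than a bounded stretch of the corridor. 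Hence after the algorithm terminates, $|\setR| = \Omega(t)$. \textbf{(3) Relate $t$ to $\numNE$ and $\xi$.} Show that the separation parameters can be pushed so that $t = \Omega(\numNE/\xi)$: shrinking the intra-pair distance to scale $\xi$ lets us pack $\Theta(1/\xi)$ pairs per unit of corridor length per nearest-enemy region, while keeping consistency and keeping $\numNE = O(1)$, yielding the claimed bound for every $0 < \xi < 1$.

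The main obstacle I anticipate is step (1)/(3) in combination: making the proximity parameter genuinely control the blow-up ratio $\numNE/\xi$ rather than just giving \emph{some} unbounded family. I need the $b_i$'s to be close enough to their $r_i$'s (distance $\Theta(\xi)$) that \FCNN's greedy ``closest representative'' rule keeps finding new local violations, yet arranged so that every $b_i$'s nearest \emph{enemy} is one of only $O(1)$ red points — otherwise $\numNE$ grows with $t$ and the ratio collapses. Balancing ``locally the blue points look like fresh enemies to the selected red points'' against ``globally they all point to the same nearest-enemy'' requires the corridor to be laid out so that nearest-\emph{neighbor} structure (which drives which point gets picked) and nearest-\emph{enemy} structure (which drives $\numNE$) decouple. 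A clean way to achieve this is to make the red corridor densely sampled so that for any blue $b_i$, its true nearest enemy is always the same endpoint region, while the \emph{selected} subset $\setR$ at intermediate stages is sparse, so $b_i$'s nearest selected point wanders and keeps triggering selections; verifying the induction that the algorithm really is forced to take $\Omega(t)$ steps, and that the final $\setR$ is consistent (so the instance is legitimate), is the technical heart.

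Finally, I would double-check the two easy-to-miss points: that the centroid initialization does not accidentally short-circuit the construction (place the corridor far from each class's geometric center, or use a class shape whose centroid is a fixed point unrelated to the corridor), and that the claimed bound holds for \emph{every} $\xi \in (0,1)$ simultaneously with the same $d$ and the same $O(1)$ class count — i.e.\ the family is parameterized by $\xi$ (and $\numNE$) with $d$ fixed, matching the statement. With the corridor picture in hand, these are routine to confirm.
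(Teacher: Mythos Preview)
Your proposed decoupling in step~(3) cannot be made to work. If each blue $b_i$ sits at distance $\Theta(\xi)$ from the red $r_i$, and the red points are spaced at unit distance, then $r_i$ \emph{is} the nearest enemy of $b_i$ and $b_i$ is the nearest enemy of $r_i$; every pair contributes two distinct nearest-enemy points, so $\numNE=\Theta(t)$ and the ratio $|\setR|/\numNE$ collapses to $O(1)$. Your proposed fix, ``densely sample the red corridor so every $b_i$'s true nearest enemy lies in the same endpoint region,'' goes the wrong way: adding more red points near $b_i$ only pulls its nearest enemy closer, never out to an endpoint. The nearest-enemy relation is a property of the \emph{full} set $\setP$, not of the evolving $\setR$, so it cannot be decoupled from local geometry by controlling which points \FCNN has selected so far.

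More fundamentally, your ``one new selection per step along the corridor'' picture does not exploit the specific weakness of \FCNN. An argument that forces one pick per iteration would apply equally to \SFCNN, which the paper later shows is bounded by $O(\numNE\log(1/\gamma))$; with $\numNE=O(1)$ and $\gamma=\Theta(\xi)$ that is only $O(\log(1/\xi))$, far short of $\Omega(1/\xi)$. What breaks \FCNN is \emph{batch} addition: in a single iteration every current point of $\setR$ adds a representative, and these batch-mates are never checked against one another. The paper's construction lives in $\RE^3$ with eleven classes and proceeds in two phases. A stack of ``multiplier'' layers first doubles the batch size each round until $\Theta(1/\xi)$ points are added in one iteration; then $\Theta(1/\xi)$ ``repetitive'' layers each receive a full batch of $\Theta(1/\xi)$ \emph{same-class} points arranged on a unit circle around a single enemy center. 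Because all ring points share that one center as their nearest enemy, each repetitive layer contributes only $O(1)$ to $\numNE$ while contributing $\Theta(1/\xi)$ to $|\setR|$. The two ideas your corridor is missing are precisely the batch amplification and the ring-of-same-class-points-around-one-enemy geometry that keeps $\numNE$ small.
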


The remaining of this section addresses the proof of this theorem, by carefully constructing a training set \setP in $\RE^3$ that exhibits the undesirable behavior in the selection process of the \FCNN algorithm.

Without loss of generality, let $\xi = 1/2^{t}$ for some value $t>3$, we construct a training set $\setP \subset \RE^3$ with a constant number of classes and the number of nearest-enemy points $\numNE$ equal to $\bigOh(1/\xi)$, for which \FCNN is forced to select $\bigOh(1/\xi^2)$ points.
As mentioned above, the key downside of the algorithm occurs when points are added to the subset in the same iteration, as they can be arbitrarily close to each other. We exploit this behavior to force the algorithm to select $\bigOh(1/\xi)$ such points on each iteration.

Intuitively, the training set \setP consists of several layers of points arranged parallel to the $xy$-plane, and stacked on top of each other around the $z$-axis (see Figure~\ref{fig:fcnn:kappa}). Each layer is a disk-like arrangement, formed by a center point and points at distance 1 from this center. Define the \emph{backbone points} of \setP to be the center points $c_i = 2i \vec{v}_z$ for $i \geq 0$. We now describe the different arrangements of points as follows (see Figure~\ref{fig:fcnn:kappa}):
\begin{align*}
\mathcal{B} &=
    c_0 \cup \left\{ y_j = c_0 + \vec{v}_x R_z(j \pi/4) \right\}^{8}_{j=1}\\[0.5ex]
\mathcal{M}_i &=
    \left\{c_{2i}, c_{2i+1}, m_i = (c_{2i}+c_{2i+1})/2 \right\}\\
    &\cup
    \left\{ r_{ij} = c_{2i} + \vec{v}_x R_z \left(j \pi/2^{1+i} \right) \right\}^{2^{2+i}}_{j=1}\\
    &\cup
    \left\{ b_{ij} = c_{2i+1} + \vec{v}_x R_z \left(j \pi/2^{1+i} \right) \right\}^{2^{2+i}}_{j=1}\\
    &\cup
    \left\{ w_{ij}\!=\!c_{2i+1}\!+\!\vec{v}_x R_z \left((j\!+\!1/2) \pi/2^{1+i}\!-\!\xi^2 \right) \right\}^{2^{2+i}}_{j=1}\\[0.5ex]
\mathcal{R}_i &=
    \left\{c_{2i}, c_{2i+1} \right\}\\
    &\cup
    \left\{ r_{ij} = c_{2i} + \vec{v}_x R_z \left(2j\pi\xi \right) \right\}^{1/\xi}_{j=1}\\
    &\cup
    \left\{ b_{ij} = c_{2i+1} + \vec{v}_x R_z \left(2j\pi\xi \right) \right\}^{1/\xi}_{j=1}
\end{align*}

\begin{figure*}[t!]
    \begin{minipage}{0.48\linewidth}
        \begin{subfigure}[b]{\linewidth}
            \centering\includegraphics[width=0.9\textwidth]{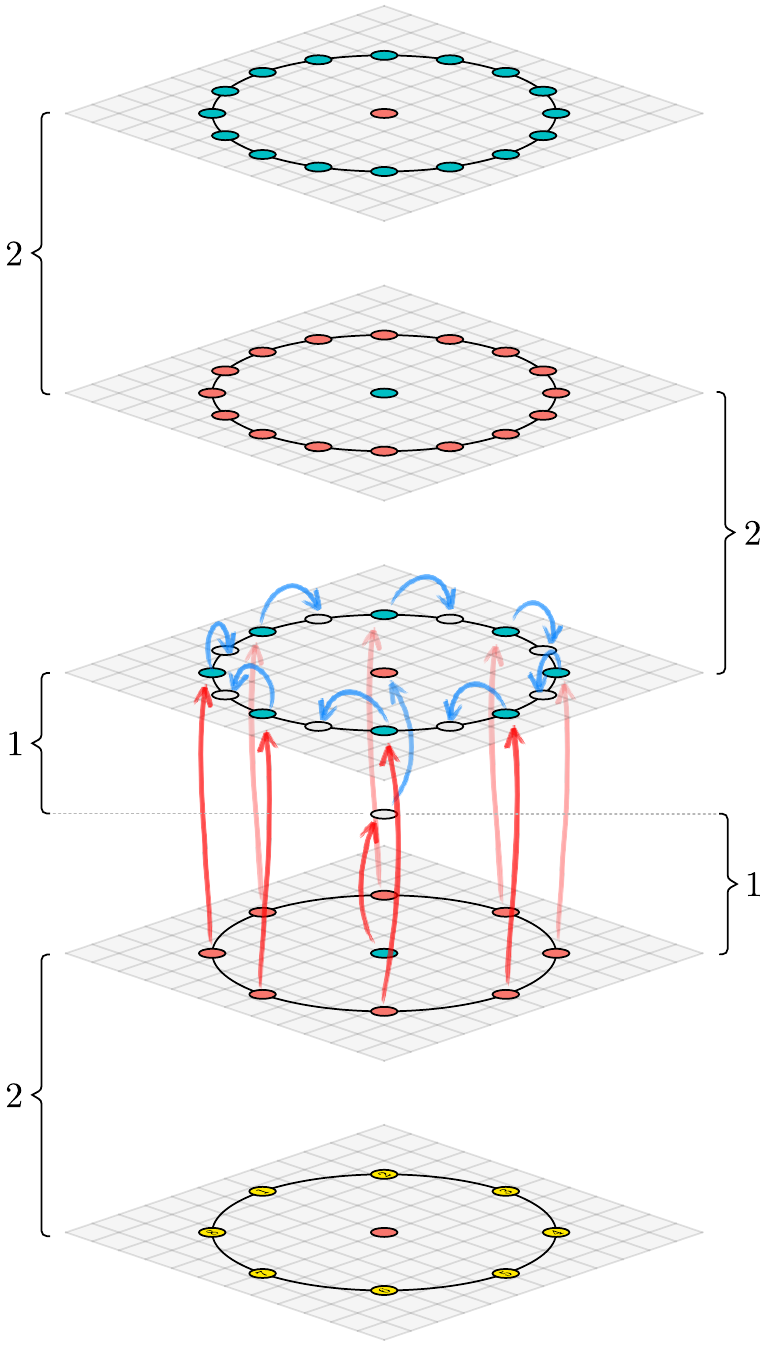}
            \caption{Entire arrangement of points, by stacking the diffe\-rent arrangements along the $z$-axis. The arrows illustrate the selection process of \FCNN on a multiplicative arrangement~$\mathcal{M}_i$.}\label{fig:fcnn:kappa:all}
        \end{subfigure}%
    \end{minipage}
    \hfill
    \begin{minipage}{0.48\linewidth}
        \begin{subfigure}[b]{\linewidth}
            \centering\includegraphics[width=0.9\textwidth]{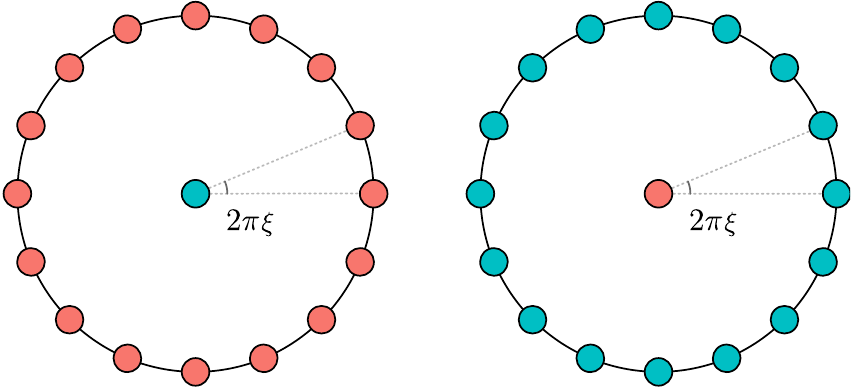}
            \caption{A repetitive arrangement $\mathcal{R}_i$. This is used to maintain the number of selected representatives to be $\bigOh(1/\xi)$ during each remaining iteration of the algorithm.}\label{fig:fcnn:kappa:rep}
        \end{subfigure}\\[2ex]
        \bigskip
        \begin{subfigure}[b]{\linewidth}
            \centering\includegraphics[width=0.9\textwidth]{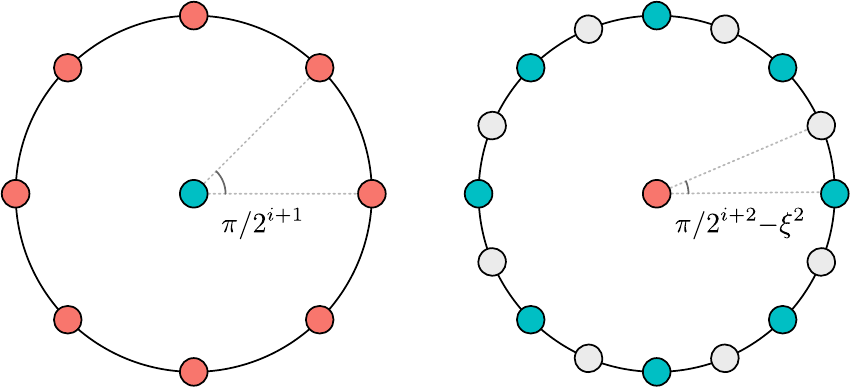}
            \caption{A multiplier arrangement $\mathcal{M}_i$. This forces \FCNN to double the number of selected representatives around the circumference after two iterations.}\label{fig:fcnn:kappa:mul}
        \end{subfigure}\\
        \bigskip
        \begin{subfigure}[b]{\linewidth}
            \centering\includegraphics[width=0.4\textwidth]{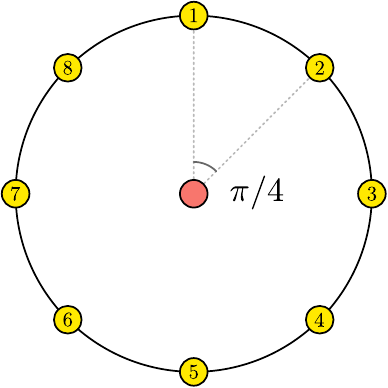}
            \caption{Base arrangement $\mathcal{B}$. Each point in the circumference belongs to one unique class $\{\textup{1},\dots,\textup{8}\}$, here colored in yellow and numbered for clarity.}\label{fig:fcnn:kappa:base}
        \end{subfigure}%
    \end{minipage}
    \vspace*{-10pt}
    \caption{Example of a training set $\setP \subset \RE^3$ for which \FCNN selects $\Omega(\numNE/\xi)$ points.}\label{fig:fcnn:kappa}
\end{figure*}

These points belong to one of 11 classes, defined by the set $\{ 1, \dots, 8, \textup{red}, \textup{blue}, \textup{white} \}$. Then, we define the labeling function $l$ as follows: $l(c_i)$ is red when $i$ is even and blue when $i$ is odd, $l(m_i)$ is white, $l(y_j)$ is the $j$-th class, $l(r_{ij})$ is red, $l(b_{ij})$ is blue, and $l(w_{ij})$ is white.

\paragraph{Base arrangement ($\mathcal{B}$).}
Consists of one single layer of points, with one red center point $c_0$ and 8 points $y_j$ in the circumference of the unit disk (parallel to the $xy$-plane), each labeled with a unique class $j$ (see Figure~\ref{fig:fcnn:kappa:base}).

The goal of this arrangement is that each of these points is the centroid of its corresponding class. The centroids of the blue and white classes can be fixed to be far enough, so we won't consider them for now. Hence, the first iteration of \FCNN will add all the points of $\mathcal{B}$. In the next iteration, each of these points will select a representative in the arrangement above. Clearly, the size of $\mathcal{B}$ is 9, and it contributes with 8 nearest-enemy points in total.
    
\paragraph{Multiplier arrangement ($\mathcal{M}_i$).}
Our final goal is to have $\bigOh(1/\xi)$ arbitrarily close points selecting representatives on a single iteration;. Initially, we only have 9, the ones in the base arrangement. While this could be simply achieved with $\bigOh(1/\xi)$ points in $\mathcal{B}$ each with a unique class, we want to use a constant number of classes. Instead, we use each multiplier arrangement to double the number of representatives selected.

$\mathcal{M}_i$ consists of (1) a layer with a blue center $c_{2i}$ and $2^{2+i}$ red points $r_{ij}$ around the unit disk's circumference, (2) a layer with a red center $c_{2i+1}$ and $2^{3+i}$ blue $b_{ij}$ and white $w_{ij}$ points around the unit disk's circumference, and (3) a white center point $m_i$ in the middle between the red and blue center points (see Figure~\ref{fig:fcnn:kappa:mul}).
Suppose at iteration $3i-1$ all the points $r_{ij}$ and $c_{2i}$ of the first layer are added as representatives of the previous arrangement, which is given for $\mathcal{M}_1$ from the selection of $\mathcal{B}$.
Then, during iteration $3i$ each $r_{ij}$ adds the point $b_{ij}$ right above, while $c_{2i}$ adds point $m_i$ (see the red arrows in Figure~\ref{fig:fcnn:kappa:all}). Finally, during iteration $3i+1$, $m_i$ adds $c_{2i+1}$, and each $b_{ij}$ adds point $w_{ij}$ as its the closest point inside the voronoi cell of $b_{ij}$ (see the blue arrows in Figure~\ref{fig:fcnn:kappa:all}). Now, with all the points of this layer added, each continues to select points in the following arrangement (either $\mathcal{M}_{i+1}$ or $\mathcal{R}_{i+1}$).

The size of each $\mathcal{M}_i$ is $3(1+2^{2+i}) = \bigOh(2^{3+i})$, and contributes with $3+2(2^{2+i}) = \bigOh(2^{3+i})$ to the total number of nearest-enemy points.
Thus, we stack $\mathcal{M}_i$ arrangements for $i \in [1,\dots,t-3]$, such that the last of these selects $1/\xi = 2^t$ points.

\paragraph{Repetitive arrangement ($\mathcal{R}_i$).}
Once the algorithm reaches the last multiplier layer $\mathcal{M}_{t-3}$, it will select $1/\xi$ points during the following iteration. The repeti\-tive arrangement allows us to continue selecting these many points on every iteration, while only increasing the number of nearest-enemy points by a constant. This arrangement consists of (1) a first layer with a blue center $c_{2i}$ surrounded by $1/\xi$ red points $r_{ij}$ around the unit disk circumference, and (2) a second layer with red center $c_{2i+1}$ and blue points $b_{ij}$ in the circumference (see Figure~\ref{fig:fcnn:kappa:rep}). Once the first layer is added all in a single iteration, during the following iteration $c_{2i}$ adds $c_{2i+1}$, and each $r_{ij}$ adds $b_{ij}$.

The size of each $\mathcal{R}_i$ is $2(1+1/\xi) = \bigOh(1/\xi)$, and it contributes with 4 points to the total number of nearest-enemy points.
Now, we stack $\bigOh(1/\xi)$ such arrangements $\mathcal{R}_i$ for $i \in [t-2, \dots, 1/\xi]$, such that we obtain the desired ratio between selected points and number of nearest-enemy points of the training set.

\paragraph{The training set.}
After defining all the necessary point arrangements, and recalling that $t = \log{1/\xi}$,
we put these arrangements together to define the training set \setP as follows:
\[
\setP = \mathcal{B} \,\bigcup\limits_{i=1}^{t-3} \mathcal{M}_i \,\bigcup\limits_{i=t-2}^{1/\xi} \mathcal{R}_i \cup \mathcal{F}
\]
where $\mathcal{F}$ is an additional set of points to fix the centroids of \setP. These extra points are located far enough from the remaining points of \setP, and are carefully placed such that the centroids of \setP are all the points of $\mathcal{B}$, plus a blue and white point from $\mathcal{F}$. Additionally, all the points of $\mathcal{F}$ should be closer to its corresponding class centroid than to any enemy centroid, and they should increase the number of nearest-enemy points by a constant. This can be done with a bounded number of extra points.

All together, by adding up the corresponding terms, the ratio between the size of \FCNN and $\numNE$ (the number of nearest-enemy points of \setP) is $\bigOh(1/\xi)$. Therefore, on this training set, \FCNN selects $\bigOh(\numNE/\xi)$ points.

\section{Keeping Distance: One by One}

Evidently, adding points in batch on every iteration of the algorithm prevents \FCNN to have provable guarantees on the size of its selected subset, just as \RSS provides.
However, this design choice is not key for any of the features of the algorithm.

Therefore, we propose to modify \FCNN such that only \emph{one} single representative is added to the subset on each iteration. We call this new algorithm \SFCNN or \emph{Single} \FCNN. Basically, the only difference between the original \FCNN and \SFCNN is on line 4 of Algorithm~\ref{alg:fcnn}, where \setR is updated by selecting one single point from the set of representatives $S$, as follows:
\[
    \setR \gets \setR \cup \{ \text{Choose one point of } S\}
\]

While extremely simple, this change in the selection process of \SFCNN allows us to successfully analyze the size of its selected subset in terms of $\numNE$, and even prove that it approximates the consistent subset of minimum cardinality.

\paragraph{Size Upper-Bound.}
To this end, we first need to introduce some termino\-logy.
Through a suitable uniform scaling, we may assume that the \emph{diameter} of \setP (that is, the maximum distance between any two points in the training set) is 1. The \emph{spread} of \setP, denoted as $\spread$, is the ratio between the largest and smallest distances in \setP. Define the \emph{margin} of \setP, denoted $\gamma$, to be the smallest nearest-enemy distance in \setP. Clearly, $1/\gamma \leq \spread$.

The metric space \metricSpace is said to be \emph{doubling}~\cite{heinonen2012lectures} if there exist some bounded value $\lambda$ such~that any metric ball of radius $r$ can be covered with at most $\lambda$ metric balls of radius $r/2$. Its \emph{doubling dimension} is the base-2 logarithm of $\lambda$, denoted as $\ddim = \log{\lambda}$. Throughout, we assume that $\ddim$ is a constant, which means that multiplicative factors depending on $\ddim$ may be hidden in our asymptotic notation. Many natural metric spaces of interest are doubling, including $d$-dimensional Euclidean space whose doubling dimension is $\Theta(d)$. It is well know that for any subset $R \subseteq \metricSet$ with some spread $\spread_R$, the size of $R$ is bounded by $|R| \leq \lceil\spread_R\rceil^{\ddim+1}$.

\begin{theorem}
\label{thm:sfcnn-size}
\SFCNN selects a subset of size:
\[
\bigOh\!\left( \numNE \log{\frac{1}{\gamma}}\ 4^{\ddim+1}\right)
\] 
\end{theorem}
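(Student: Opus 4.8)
The plan is to charge each point of the selected subset \setR to a pair consisting of one of the $\numNE$ nearest-enemy points of \setP and a dyadic distance scale, and then to bound the number of points charged to any single pair by $4^{\ddim+1}$. First I would extract the property that the ``single representative'' rule buys us. After the first iteration, every point $q$ that \SFCNN adds is a representative of some already-selected point $p$, so $\nn{q,\setR}=p$ and $l(q)\neq l(p)$; hence $\dist{q}{p}\geq\dne{q}$ (as $p$ is an enemy of $q$) and $\dist{q}{r}\geq\dist{q}{p}$ for every $r$ currently in \setR (as $p=\nn{q,\setR}$), so $q$ sits at distance at least $\dne{q}$ from every earlier-selected point. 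Since \setR only grows and the lone first-iteration centroid precedes everything else, this yields the \emph{keep-your-distance} lemma: any two distinct selected points $p,p'$ satisfy $\dist{p}{p'}\geq\min(\dne{p},\dne{p'})$. This is precisely what fails for \FCNN, where two representatives added in one batch may be arbitrarily close --- the loophole behind Theorem~\ref{thm:fcnn-unbounded}.

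Next I would bucket. Since \setP has diameter $1$, every $p\in\setP$ has $\gamma\leq\dne{p}\leq 1$, so there is an integer $j_p$ with $\dne{p}\in(2^{-j_p},2^{-j_p+1}]$, ranging over only $\bigOh(\log(1/\gamma))$ values. Assign $p$ to the bucket $(\nenemy{p},j_p)$; there are $\numNE$ possible first coordinates, so $\bigOh(\numNE\log(1/\gamma))$ nonempty buckets.

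Finally I would bound one bucket $(e,j)$. Each selected $p$ in it has $\dist{p}{e}=\dne{p}\leq 2^{-j+1}$, so the whole bucket lies in a metric ball of radius $2^{-j+1}$ (hence of diameter at most $2^{-j+2}$), while the keep-your-distance lemma forces any two of its points to be more than $2^{-j}$ apart. Thus its spread is strictly less than $4$, and the stated packing bound for doubling metrics gives at most $\lceil 4\rceil^{\ddim+1}=4^{\ddim+1}$ points in it. Multiplying by the bucket count proves $|\setR|=\bigOh(\numNE\log(1/\gamma)\,4^{\ddim+1})$. (If one instead keeps all $c$ class centroids, the same charging still works: two centroids, being of distinct classes, are each an enemy of the other and so are at distance at least the larger of their two nearest-enemy distances, which is all the keep-your-distance lemma needs.)

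The only delicate step is the keep-your-distance lemma; the dyadic partition and the doubling-dimension packing estimate afterward are routine and in fact parallel the known analysis of \RSS. So the real content is simply that, once representatives are added one at a time, selected points sharing a nearest-enemy cannot bunch up below the scale of their nearest-enemy distances.
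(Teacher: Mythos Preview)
Your proposal is correct and follows essentially the same route as the paper: bucket selected points by their nearest-enemy and a dyadic scale of the nearest-enemy distance, argue that two selected points in the same bucket are at least the bucket scale apart, and finish with a doubling packing bound of $4^{\ddim+1}$ per bucket. The only cosmetic difference is that you isolate the separation step as a standalone ``keep-your-distance'' lemma (\,$\dist{p}{p'}\geq\min(\dne{p},\dne{p'})$ for any two selected points\,), whereas the paper proves the same inequality inline, by the same contradiction via the \emph{voren} rule, directly inside a fixed bucket.
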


\begin{proof}
This follows by a charging argument on each nearest-enemy point in the training set.
Consider one such point $p \in \{ \nenemy{r} \mid r \in \setP\}$ and a value $\sigma \in [\gamma,1]$. We define $\setR_{p,\sigma}$ to be the subset of points from \SFCNN whose nearest-enemy is $p$, and whose distance to $p$ is between $\sigma$ and $2\sigma$. That is, $\setR_{p,\sigma} = \{ r \in \setR \mid \nenemy{r}=p \wedge \dist{r}{p} \in [\sigma,2\sigma) \}$. These subsets define a partitioning of \setR when considering all nearest-enemy points of \setP, and values of $\sigma = \gamma\, 2^i$ for $i=\{0,1,2,\dots,\lceil\log{\frac{1}{\gamma}}\rceil\}$.

Consider any two points $a, b \in \setR_{p,\sigma}$ in these subsets. Assume \wwlog that point $a$ was selected by the algorithm before point $b$ (\ie in a prior iteration). We show that $\dist{a}{b} \geq \sigma$. By contradiction, assume that $\dist{a}{b} < \sigma$, which immediately implies that $a$ and $b$ belong to the same class. Moreover, recalling that $b$'s nearest-enemy in \setP is $p$, at distance $\dist{b}{p} \geq \sigma$, this implies that $b$ is closer to $a$ than to any enemy in \setR. Therefore, by the definition of the \emph{voren} function, $b$ could never be selected by \SFCNN, which is a contradiction.

This proves that $\dist{a}{b} \geq \sigma$.
Additionally, we know that $\dist{a}{b} \leq \dist{a}{p} + \dist{p}{b} \leq 4\sigma$. Thus, using a simple packing argument with the known properties of doubling spaces, we have that $|\setR'_{p,\sigma}| \leq 4^{\ddim+1}$.

Altogether, by counting over all the $\setR_{p,\sigma}$ sets for every nearest-enemy in the training set and values of $\sigma$, the size of \setR is upper-bounded by $|\setR| \leq \numNE \left\lceil\log{1/\gamma} \right\rceil 4^{\ddim+1}$. This completes the proof.
\end{proof}

\paragraph{An Approximation Algorithm.}
Denote \textsc{Min-CS} as the problem of computing a minimum cardinality consistent subset of \setP. This problem is known to be NP-hard~\cite{Wilfong:1991:NNP:109648.109673,Zukhba:2010:NPP:1921730.1921735,khodamoradi2018consistent}, even to approximate~\cite{gottlieb2014near} in polynomial time within a factor of $2^{({\ddim \log{(1/\gamma)})}^{1-o(1)}}$.

As previously mentioned, the \NET algorithm~\cite{gottlieb2014near} computes a tight approximation for the \textsc{Min-CS} problem. The algorithm is rather simple: it just computes a $\gamma$-net of \setP, where $\gamma$ is the margin (the smallest nearest-enemy distance in \setP). This clearly results in a consistent subset of \setP, whose size is at most $\left\lceil 1/\gamma \right\rceil^{\ddim+1}$. A similar result can be proven for \SFCNN.

\begin{theorem}
\label{thm:sfcnn-approx-factor}
\SFCNN computes a tight approximation for the \textsc{Min-CS} problem.
\end{theorem}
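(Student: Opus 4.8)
The plan is to mimic the analysis of \NET: show that the subset \setR returned by \SFCNN is, apart from its $\numClasses$ initial centroids, $\gamma$-separated (every two of its points lie at distance at least $\gamma$), and then apply the packing bound for doubling spaces stated in the preliminaries. Let $\opt$ be the size of a minimum cardinality consistent subset of \setP; note that $\opt\geq\numClasses$ (a consistent subset must contain a point of each class) and $\opt\geq1$. Observe also that, just as for \FCNN, \SFCNN returns a consistent subset: its loop halts only when $S=\phi$, which is exactly the condition that every point of \setP has a nearest neighbor in \setR of its own label, and it does halt because \setR gains exactly one point per iteration while $\setR\subseteq\setP$ is finite.

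For the separation claim, write $\setR=C\cup\setR'$, where $C$ is the set of (at most $\numClasses$) centroids inserted in the first iteration and $\setR'$ is the set of representatives inserted one at a time afterwards. Whenever a point $b\in\setR'$ is inserted, it is chosen from $\textup{voren}(a,\setR,\setP)$ for $a=\nn{b,\setR}$, so $a$ is the nearest neighbor of $b$ in the subset present at that moment and $l(a)\neq l(b)$. Since $a$ is then an enemy of $b$ and $\gamma$ is the margin, $\dist{b}{a}\geq\dne{b}\geq\gamma$; and since $a$ is the nearest neighbor of $b$ among the points present, $\dist{b}{s}\geq\dist{b}{a}\geq\gamma$ for every such point $s$. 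In particular, any two points of $\setR'$ lie at distance at least $\gamma$ (look at the later-inserted one), so $\setR'$ is $\gamma$-separated, and moreover every point of $\setR'$ is at distance at least $\gamma$ from every centroid.

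Because \setP has diameter $1$, the spread of $\setR'$ is at most $1/\gamma$, so $|\setR'|\leq\lceil 1/\gamma\rceil^{\ddim+1}$ by the packing bound. Hence
\[
|\setR|=|C|+|\setR'|\leq \numClasses+\left\lceil 1/\gamma\right\rceil^{\ddim+1}\leq\left(1+\left\lceil 1/\gamma\right\rceil^{\ddim+1}\right)\opt,
\]
using $\opt\geq\numClasses$ and $\opt\geq1$. Thus \SFCNN is an $\bigOh\!\left(\lceil 1/\gamma\rceil^{\ddim+1}\right)$-approximation for \textsc{Min-CS}, the same guarantee as \NET. Since $\lceil 1/\gamma\rceil^{\ddim+1}=2^{\Theta(\ddim\log(1/\gamma))}$ while no polynomial-time algorithm approximates \textsc{Min-CS} within $2^{(\ddim\log(1/\gamma))^{1-o(1)}}$, this factor is tight up to the $1-o(1)$ in the exponent. (If, as in the instance built for Theorem~\ref{thm:fcnn-unbounded}, each class's centroid carries that class's own label, then the centroids are pairwise enemies and hence also $\gamma$-separated, so all of \setR is $\gamma$-separated and the additive $\numClasses$ term disappears.)

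The crux --- and the only reason to pass from \FCNN to \SFCNN --- is the inequality $\dist{b}{s}\geq\dist{b}{a}$ for every $s$ already in \setR. In \FCNN many representatives are committed at once, each computed against the \emph{same} old subset, so two freshly inserted representatives can be much closer than $\gamma$; this is precisely what powers the $\Omega(\numNE/\xi)$ lower bound of Theorem~\ref{thm:fcnn-unbounded} and what breaks the separation argument. Inserting one representative per iteration forces every new point to already see all earlier ones, which is exactly what keeps \setR (essentially) $\gamma$-separated; everything else is a routine packing estimate.
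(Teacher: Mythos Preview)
Your proof is correct and shares the paper's key insight: because \SFCNN inserts one representative at a time, each newly inserted point $b$ has its current nearest neighbour in \setR be an enemy, so $b$ is at distance at least $\gamma$ from every point already present. The paper, however, organizes the packing differently. Instead of showing that all of $\setR'$ is globally $\gamma$-separated and invoking the spread bound $|\setR'|\le\lceil1/\gamma\rceil^{\ddim+1}$, the paper partitions \setR by the $\gamma$-balls $B_p$ of the \NET output, argues (via the same \emph{voren} observation) that any two points of $B_p\cap\setR$ are $\gamma$-apart, and bounds $|B_p\cap\setR|\le 2^{\ddim+1}$ by a local packing; summing over $p\in\NET$ gives $|\setR|\le 2^{\ddim+1}|\NET|$. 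Your route is a bit more self-contained (no reference to \NET is needed) but forces you to peel off the centroids and absorb them via $\opt\ge\numClasses$; the paper's route avoids that bookkeeping because inside each $B_p$ there can be at most one centroid anyway, and it yields the slightly sharper multiplicative comparison $|\setR|\le 2^{\ddim+1}|\NET|$ rather than just the common upper bound $\bigOh((1/\gamma)^{\ddim+1})$.
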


\begin{proof}
This follows from a direct comparison to the resulting subset of the \NET algorithm. For any point $p \in \NET$, let $B_p$ be the set of points of \setP ``covered'' by $p$, that is, whose distance to $p$ is at most $\gamma$. By the covering property of nets, this defines a partition on \setP when considering every point $p$ selected by \NET.

Let's analyze the size of $B_p \cap \setR$, that is, for any given $B_p$ how many points could have been selected by the \SFCNN algorithm. Let $a, b \in B_p \cap \setR$ be two such points, where without loss of generality, point $a$ was selected in an iteration before $b$. Both $a$ and $b$ must belong to the same class as $p$, as their distance to $p$ is at most $\gamma$, which is the smallest nearest-enemy distance in \setP. Moreover, by the definition of the \emph{voren} function, it is easy to show that $\dist{a}{b} \geq \gamma$. By a simple packing argument in doubling metrics, the size of any $B_p \cap \setR$ is at most $2^{\ddim+1}$. All together, we have that the size of the subset selected by \SFCNN is $2^{\ddim+1}\,|\NET| = \bigOh\left( (1/\gamma)^{\ddim+1} \right)$.
\end{proof}

\section{Experimental Results}

The importance of \FCNN relies on its performance in practice, despite the lack of theoretical guarantees. A natural question is whether the simple change we proposed on the algorithm, negatively affects its performance in real-world training sets.

Thus, to get a clearer impression of the relevance of these results in practice, we performed experimental trials on several training sets, both synthetically generated and widely used benchmarks. First, we consider 21 training sets from the UCI \emph{Machine Learning Repository}\footnote{\url{https://archive.ics.uci.edu/ml/index.php}} which are commonly used in the literature to evaluate condensation algorithms~\cite{Garcia:2012:PSN:2122272.2122582}. These consist of a number of points ranging from 150 to $58000$, in $d$-dimensional Euclidean space with $d$ between 2 and 64, and 2 to 26 classes.
We also generated some synthetic training sets, containing $10^5$ uniformly distributed points, in 2 to 3 dimensions, and 3 classes. 
All training sets used in these experimental trials are summarized in Table~\ref{table:data}. The implementation of the algorithms, training sets used, and raw results, are publicly available\footnote{\url{https://github.com/afloresv/nnc/}}.

\begin{table}[h!]
\footnotesize
\centering
\begin{tabular}{||c|cccc||}
\hline\rule{0pt}{9pt}
Training set & $n$ & $d$ & $c$ & $\numNE\ (\%)$ \\ \hline
banana & 5300 & 2 & 2 & 811 (15.30\%) \\
cleveland & 297 & 13 & 5 & 125 (42.09\%) \\
glass & 214 & 9 & 6 & 87 (40.65\%) \\
iris & 150 & 4 & 3 & 20 (13.33\%) \\
iris2d & 150 & 2 & 3 & 13 (8.67\%) \\
letter & 20000 & 16 & 26 & 6100 (30.50\%) \\
magic & 19020 & 10 & 2 & 5191 (27.29\%) \\
monk & 432 & 6 & 2 & 300 (69.44\%) \\
optdigits & 5620 & 64 & 10 & 1245 (22.15\%) \\
pageblocks & 5472 & 10 & 5 & 429 (7.84\%) \\
penbased & 10992 & 16 & 10 & 1352 (12.30\%) \\
pima & 768 & 8 & 2 & 293 (38.15\%) \\
ring & 7400 & 20 & 2 & 2369 (32.01\%) \\
satimage & 6435 & 36 & 6 & 1167 (18.14\%) \\
segmentation & 2100 & 19 & 7 & 398 (18.95\%) \\
shuttle & 58000 & 9 & 7 & 920 (1.59\%) \\
thyroid & 7200 & 21 & 3 & 779 (10.82\%) \\
twonorm & 7400 & 20 & 2 & 1298 (17.54\%) \\
wdbc & 569 & 30 & 2 & 123 (21.62\%) \\
wine & 178 & 13 & 3 & 37 (20.79\%) \\
wisconsin & 683 & 9 & 2 & 35 (5.12\%) \\
v-100000-2-3-15 & 100000 & 2 & 3 & 1909 (1.90\%) \\
v-100000-2-3-5 & 100000 & 2 & 3 & 788 (0.78\%) \\
v-100000-3-3-15 & 100000 & 3 & 3 & 7043 (7.04\%) \\
v-100000-3-3-5 & 100000 & 3 & 3 & 3738 (3.73\%) \\
v-100000-4-3-15 & 100000 & 4 & 3 & 13027 (13.02\%) \\
v-100000-4-3-5 & 100000 & 4 & 3 & 10826 (10.82\%) \\
v-100000-5-3-15 & 100000 & 5 & 3 & 22255 (22.25\%) \\
v-100000-5-3-5 & 100000 & 5 & 3 & 17705 (17.70\%) \\
\hline
\end{tabular}
\vspace*{.3cm}
\caption{Training sets used to evaluate the performance of condensation algorithms. Indicates the number of points $n$, dimensions $d$ (Euclidean space), classes $c$, nearest-enemy points $\numNE$ (also in percentage \emph{w.r.t.} $n$).}\label{table:data}
\end{table}

We test 7 different condensation algorithms, namely \FCNN, \SFCNN, \RSS, \VSS, \MSS, \CNN and \NET. To compare their results, we consider their runtime and the size of the selected subset. Clearly, these values might differ greatly on training sets whose size are too distinct. Therefore, before comparing the raw results, these are normalized. The runtime of an algorithm for a given training set is normalized by dividing it by $n$, the size of the training set. The size of the selected subset is normalized by dividing it by $\numNE$, the number of nearest-enemy points in the training set, which characterizes the complexity of the boundaries between classes.

\begin{figure}[h!]
    \centering
    \includegraphics[width=\linewidth]{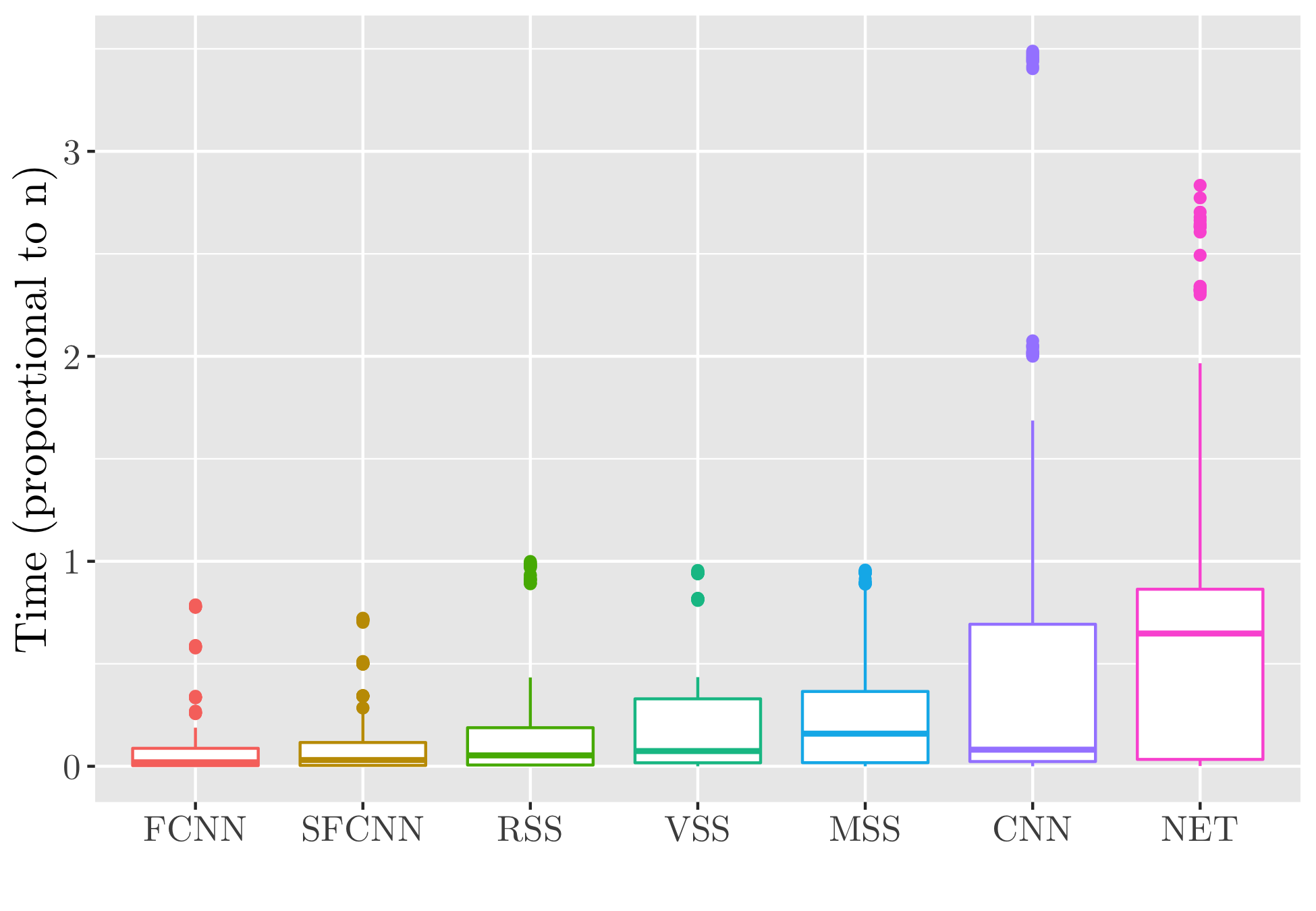}
    \vspace*{-25pt}
    \caption{Running time.}\label{fig:exp:standard:time}
\end{figure}

Figures \ref{fig:exp:standard:time} and \ref{fig:exp:standard:size} summarize the experimental results. Evidently, the performance of \SFCNN is equivalent to the original \FCNN algorithm, both in terms of runtime and the size of their selected subsets, showing that the proposed modification does not affect the behavior of the algorithm in real-world training sets. Both \FCNN and \SFCNN outperform other condensation algorithms in terms of runtime, while their subset size is comparable in all cases, with the exception of the \NET algorithm.

\begin{figure}[h!]
    \centering
    \includegraphics[width=\linewidth]{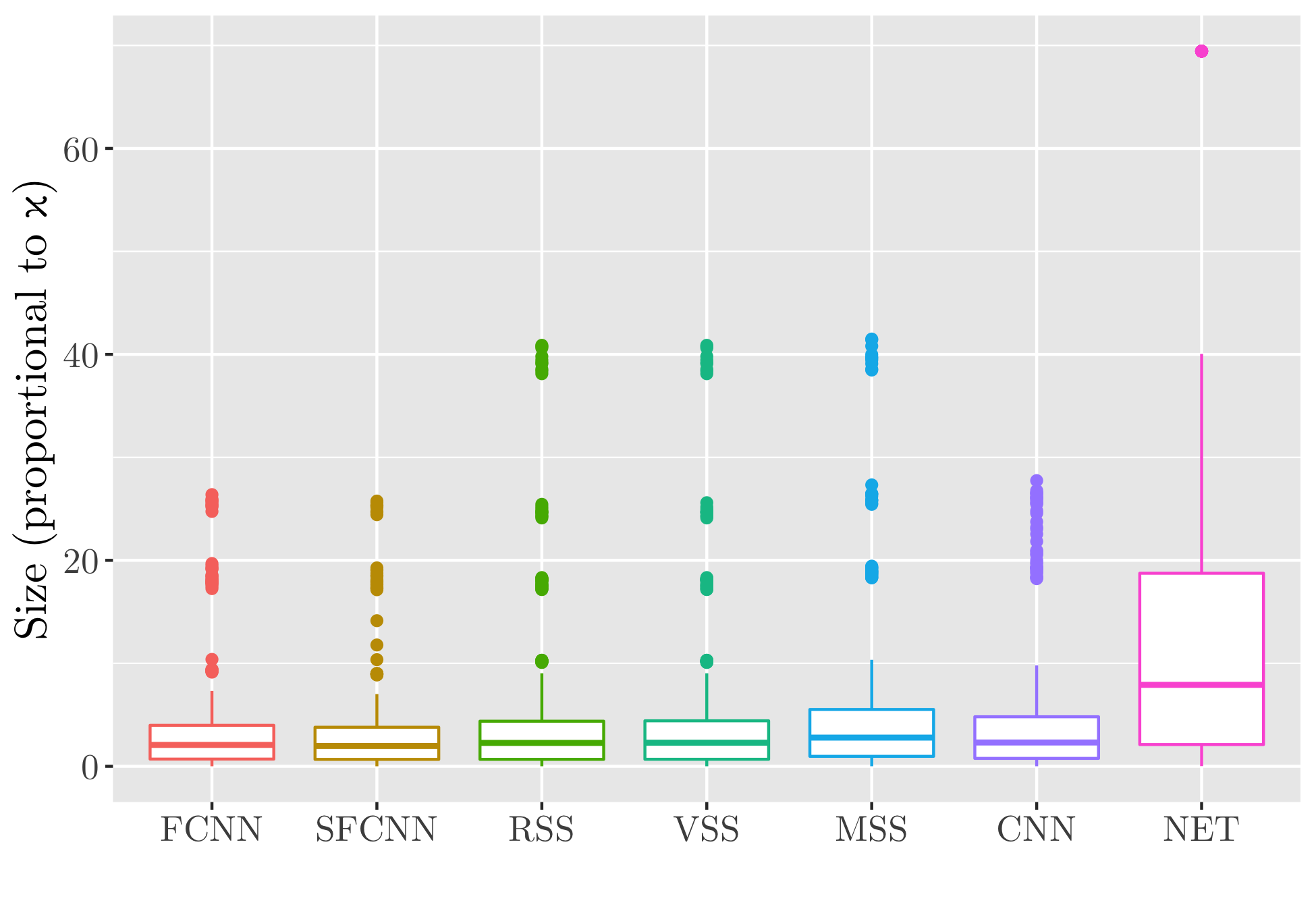}
    \vspace*{-25pt}
    \caption{Size of the selected subsets.}\label{fig:exp:standard:size}
\end{figure}



\small
\bibliographystyle{abbrv}
\bibliography{nnc}

\end{document}